\newtheorem{theorem}{Theorem}
\title{\LARGE \bf
Performance Analysis of Event-Triggered Consensus Control for Multi-agent Systems under Cyber-Physical Attacks
}
\author{ Farzaneh Tatari$^{1}$,  Aquib Mustafa$^{1}$, Majid Mazouchi$^{1}$, Hamidreza Modares$^{1}$, Christos G. Panayiotou$^{2}$ \\ and  Marios M. Polycarpou$^{2}$
\thanks{}
\thanks{F.~Tatari$^{1}$, A.~Mustafa, M.~Mazouchi, and H.~Modares  are with the Department of Mechanical Engineering, Michigan State University, East Lansing, MI, 48824, USA
        {\tt\small tatarifa@msu.edu; mustaf15@msu.edu; mazouchi@msu.edu; modaresh@msu.edu}}%
\thanks{$^{2}$M.~M. Polycarpou and C.~G. Panayiotou are with the KIOS Research and Innovation Center of Excellence and the Department of Electrical and Computer Engineering, University of Cyprus, Nicosia, 1678, Cyprus
        {\tt\small christosp@ucy.ac.cy; mpolycar\allowbreak @ucy.ac.cy}}%
}
\begin{document}

\maketitle
\thispagestyle{empty}
\pagestyle{empty}

\begin{abstract}

This work presents a rigorous analysis of the adverse effects of cyber-physical attacks on the performance of multi-agent consensus with event-triggered control protocols. It is shown how a strategic malicious attack on sensors and actuators can deceive the triggering condition of both state-based event-triggered mechanism and combinational state-based event-triggered mechanism, which are commonplace and widely used in the literature. More precisely, it is first shown that a deception attack in the case of combinational state-based event-triggered mechanism can result in a non-triggering misbehavior, in the sense that the compromised agent does not trigger any event and consequently results in partial feedback disconnectivity by preventing information from reaching the local neighbors of the compromised agent. This indicates that the combinational state-based event-triggered mechanism can be leveraged by the attacker to harm the network connectivity by rendering the recent data unavailable to agents. It is then shown that the deception attack in the case of state-based event-triggered mechanism can result in a continuous-triggering misbehavior in the sense that  the event-triggered mechanism continuously generates triggering events resulting in undesirable phenomenon of Zeno behavior. Finally, numerical simulations are presented to illustrate the theoretical findings.

\end{abstract}

\section{INTRODUCTION}

The success of next-generation connected autonomous vehicles can profoundly impact the transport sector globally by improving efficiency, preventing traffic congestion, and increasing road safety. In the past years, toward this goal, a rich body of work has considered designing distributed multi-agent systems (MASs) that leverage communication between agent (vehicles) to further reduce congestion. Generally, in distributed control of MASs, a set of agents communicate with each other over a communication network to reach a coordinated group behavior. Specifically, in the consensus control problem, the coordinated group behavior is specified as achieving agreement between agents on some quantity of interest. 




Traditional approaches for implementation of distributed control protocols of MASs require continuous exchange of information among agents, which demands a large communication bandwidth and can result in congestion for systems with resource-limited embedded micro-processors. To alleviate this issue, several real-time scheduling methods, called event-triggered control approaches, have been introduced for reducing the communication burden while keeping the performance at an acceptable level \cite{Tabuada2007,Dimos2012TAC,Yi2019Automatica}. Typically, communication and control updates are executed when the ratio of a certain error norm exceeds a predefined threshold. A rich body of literature has been developed on event-triggered control approaches \cite{zhu2014auto,cheng2017TAC,Hu2017cyber,VAMVOUDAKIS2018412}. In these approaches  it is assumed that the communication network and agents are reliable and not compromised. This assumption, nevertheless, can be violated in the presence of cyber-physical attacks.

The design of secure event-triggered controllers has been recently studied in \cite{Feng2017ACC,Ding2017cyber,Shoukry2016TAC}. In \cite{Feng2017ACC}, a secure average consensus problem is considered for linear MASs subject to denial of service (DoS) attacks. It is shown in \cite{Feng2017ACC} that resilience against DoS attacks is accomplished as long as the frequency and period of DoS attacks satisfy certain conditions. In \cite{Ding2017cyber}, an event-triggered  control approach is introduced for the consensus of MASs with lossy sensors under cyber-physical attacks. The cyber-physical attacks considered in \cite{Ding2017cyber} are deception attacks which are represented by bounded disturbances occurring randomly. In \cite{Shoukry2016TAC}, the authors develop an event-based algorithm to mitigate attacks on the estimator by reconstructing both the state as well as the sparse attack signal for single-agent discrete-time linear systems. No control protocol is assumed in \cite{Shoukry2016TAC}.


Despite significant progress in designing resilient distributed control systems \cite{Dibaji2018IFAC,Sandberg2015IEEE,Pasqualetti2015,Vamvoudakis2018,Teixeira2015Automatica,Etesami2019,amini2018resilient,Tsiakkas2018CDC,du2021resilient}, there is no rigorous analysis on how a stealthy attack can leverage the event-triggered mechanism in distributed control protocols to undermine the system performance. The science of modeling and analysis of adverse effects of attacks is a key step in securing the system by identifying possible vulnerabilities of event-triggered mechanisms, which helps to design enhanced defense mechanisms against them. Towards this aim, in this paper, the effect of cyber-physical attacks on event-triggered based distributed control methods is rigorously studied. Two general event mechanisms, namely, state-based and combinational state-based event mechanisms, are reviewed, and, then, their vulnerabilities to attacks are investigated. First, we show that the deception attack on a combinational state-based event-triggered mechanism (CS-ETM) may result in a non-triggering misbehavior. In this case, even if the performance of the compromised agent is far from the desired, the event triggering mechanism is fooled by the attacker and consequently does not trigger any event, virtually disconnecting all its outcoming links. Hence, this deception attack acts like a denial-of-service (DoS) attack, which can greatly harm the speed of the information flow within agents and the network connectivity. Moreover, we show that the deception attack on a state-based event-triggered mechanism (S-ETM) can lead to a continuous-triggering misbehavior, where the event-triggered mechanism continuously generates events, resulting in Zeno behavior \cite{Ames44CDC}. Zeno behavior is an extremely undesirable phenomenon in hybrid systems since the system is forced to sample excessively fast, and as result the execution instants get too close to each other, causing, a countless number of discrete transitions to take place in a finite time interval \cite{Ames44CDC}. Moreover, such deception attacks on S-ETM may exhaust the agents computational or communication resources in the cyber layer.

\medskip

\noindent
\textbf{Notations:} The following notations are used throughout this paper. ${\mathbb{R}^n}$ and ${\mathbb{R}^{n \times m}}$ represent, respectively, the $n$-dimensional real vector space, and the $n \times m$ real matrix space.  Let ${1_n}$ be the column vector with all entries equal to one. ${\mathcal{I}_n}$ represents the  $n \times n$ identity matrix. $diag\left( {{d_1},...,{d_n}} \right)$ represents a block-diagonal matrix with matrices ${d_1},...,{d_n}$ on its diagonal. The symbol $ \otimes $ represents the Kronecker product, while ${\left\| . \right\|}$ denotes the Euclidean norm. Given a matrix $E \in {{\mathbb{R}}^{m \times n}}$, ${(E)^T} \in {{\mathbb{R}}^{n \times m}}$ denotes its transpose. ${\mathcal U}(a,{\rm{ }}b)$  with $a<b$ denotes an uniform distribution between the interval $a$ and $b$. Finally, $X \sim {\cal U} (a,{\rm{ }}b)$ denotes that $X$ is distributed uniformly with a probability density function of ${f_X}\left( x \right) = {1 \mathord{\left/
 {\vphantom {1 {(b - a)}}} \right.
 \kern-\nulldelimiterspace} {(b - a)}}:{\mkern 1mu} a < x < b$.

\section{PRELIMINARY}
In this section, some background on the graph theory and the problem formulation are provided.

\subsection{Graph Theory} 
A graph ${\cal G}$ with $N$ nodes consists of a pair $\left( {{\cal V},{\cal E}} \right)$, in which ${\cal V}{\rm{  =   }}\{ {v_1}, \cdots ,{v_N}\} $ is the set of nodes and  ${\cal E} \subseteq {\cal V} \times {\cal V}$ is the set of edges. The adjacency matrix is defined as $\mathcal{A} = [ {{a_{ij}}}]$, with  ${a_{ij}} = 1$ if $({v_j},{v_i}) \in {\cal E}$, and ${a_{ij}} = 0$, otherwise.  A graph is undirected if ${a_{ij}} = {a_{ji}}$. The nodes ${\nu _i}$ and  ${\nu _j}$ are adjacent if there exists an edge between them. The set of neighbors of node ${\nu _i}$  is denoted by $N_i^I = \{ {{\nu _j} \in {\cal V}:( {{\nu _j},{\nu _i}} ) \in {\cal E},j \ne i} \}$. A path from node ${\nu _i}$ to node ${\nu _j}$ is a sequence of distinct nodes starting from node ${\nu _i}$ and ending with node ${\nu _j}$ while consecutive nodes are adjacent. An undirected graph ${\cal G}$ is said to be connected if there exists a path between every pair of nodes. A node ${\nu _i}$ is said to be reachable from a node ${\nu _j}$ if there exists a path from ${\nu _j}$ to ${\nu _i}$. In a connected graph, all nodes are reachable from one another.  The in-degree matrix of the graph ${\cal G}$ is defined as $D = diag({d_i})$, where  ${d_i} = \sum\nolimits_{j \in N_i^I} {{a_{ij}}} $ is the weighted in-degree of node ${\nu _i}$.  The graph Laplacian matrix of ${\cal G}$ is defined as ${\cal L} = D - \mathcal{A}$. 

\subsection{Problem Formulation} 
We consider a MAS composed of $N$ agents having identical dynamics given by
\begin{equation}
{\dot x_i}(t) =  A{x_i}(t) + B{u_i}(t), \label{eq:1}
\end{equation}
where ${x_i}(t) \in {\mathbb{R}^n}$ and ${u_i}(t) \in {\mathbb{R}^m}$ denote the state and control input, respectively, and $A$ and $B$ are the drift and input dynamics, respectively.

\noindent
\quad \textbf{Assumption 1.} The communication graph ${\cal G}$ is undirected and connected with no self-connections.

\noindent
\quad \textbf{Assumption 2}. $(A,B)$ in (\ref{eq:1}) is stabilizable.

\noindent
\quad \textbf{Problem 1.} The objective of distributed control is to design local controllers for each agent, i.e., ${u_i(t)}$ in \eqref{eq:1}, so that agents reach consensus. That is,
\begin{equation}\label{eq:2}
\mathop {\lim }\limits_{t \to \infty } {\text{ }}( {{x_i}(t) - {x_j}(t)} ) \to 0\begin{array}{*{20}{c}}
  {}&{\forall i,j = 1, \ldots ,N}.
\end{array}
\end{equation}

Several event-triggered control protocols are designed in the literature to solve Problem 1, two of which are reviewed in the next section. However, all these results assumed a reliable network and ignored cyber-physical attacks. 

\smallskip

\section{EVENT-TRIGGER BASED CONSENSUS FOR DISTRIBUTED MULTI-AGENT SYSTEMS}
There are two main approaches for event-triggered control protocols for MASs: the event-triggered mechanism based on the error in the agent's local neighborhood tracking error, which we call combinational state-based event-triggered mechanism (CS-ETM), and the event-triggered mechanism based on the error in the agent's state, which we call state-based event-triggered mechanism (S-ETM). For completeness, in this section, both CS-ETM and S-ETM are reviewed.

\subsection{Combinational state-based event mechanism} 
In the CS-ETM approach, the measurement error is obtained by
\begin{equation}\label{eq:7}
{\bar e_i}(t) = {q_i}(t)-{q_i}(t_k^i),
\end{equation}
where
\begin{equation}\label{eq:7*}
 {q_i}(t) = \sum\limits_{j \in {N_i^I}} {({x_j}(t) - {x_i}(t))},
\end{equation}
\begin{equation}\label{eq:7**}
  {q_i}(t_k^i) = \sum\limits_{j \in {N_i^I}} {({x_j}(t_{k}^i) - {x_i}(t_{k}^i))},
 \end{equation}
 where ${q_i}(t)$ and ${q_i}(t_k^i)$ denote the local neighborhood tracking error in time instants $t$ and $t_k^i$, respectively. Then, using the measurement error \eqref{eq:7}, the triggering condition is obtained as \cite{Fang2013,Wenfeng2016}
\begin{equation}\label{eq:8}
\left\| {\bar {e_i}}(t) \right\| \ge {\eta _i}\left\| {{q_i}}(t) \right\|,
\end{equation}
where $0 < {\eta _i} < 1$, and the control protocol becomes
\begin{equation}\label{eq:9}
{u_i}(t) = K{q_i}(t_k^i),\begin{array}{*{20}{c}}
  {}&{} 
\end{array}t \in [ {t_k^i,t_{k + 1}^i} ).
\end{equation}
\smallskip
where $K \in {\mathbb{R}^{m \times n}}$ is the feedback control gain matrix, and $K$ is designed such that $A-{\lambda _i}BK$, $\forall i = 2,{\rm{ }}.{\rm{ }}.{\rm{ }}.{\rm{ }},N$ become Hurwitz where $\lambda _i$, $\forall i = 2,{\rm{ }}.{\rm{ }}.{\rm{ }}.{\rm{ }},N$ are the nonzero eigenvalues of the
graph Laplacian matrix $\cal L$.

\subsection{State-based event-triggered mechanism} 
 In the S-ETM approach, the measurement error depends on the agent's state, given by
\begin{equation}\label{eq:4}
{e_i}(t) = {x_i}(t_k^i) - {x_i}(t),
\end{equation}
where $t_k^i$ denotes the ${k}$-th triggering event of agent $i$ and ${x_i}(t_k^i)$ is the state of agent $i$ at the triggering event $t_k^i$. Now, let $A=0$, $B=1$, $m=1$, and $n=1$. Then, the triggering condition is given by \cite{Dimos2012TAC}
\begin{equation}\label{eq:5}
(e_i(t))^2 \ge {\eta _i}{( {\sum\limits_{j \in {N_i^I}} { ({{x_i}(t) - {x_j}(t)} )} })^2}),
\end{equation}
where $0 < {\eta _i} < 1$. The condition \eqref{eq:5} is only based on the relative information of $i$-th agent's neighbors. In this case, the control protocol for each agent is 
\begin{equation}\label{eq:6}
{u_i}(t) =  -K \sum\limits_{j \in {N_i^I}} {( {{x_i}(t_k^i) - {x_j}(t_{k'}^j)} )}, \,\,\,\,\,\,\,\,t \in [t_k^i,t_{k + 1}^i),
\end{equation}
where $K \in {\mathbb{R}^{m \times n}}$ is the feedback control gain matrix, and $K$ is designed such that $A-{\lambda _i}BK$, $\forall i = 2,{\rm{ }}.{\rm{ }}.{\rm{ }}.{\rm{ }},N$ become Hurwitz where $\lambda _i$, $\forall i = 2,{\rm{ }}.{\rm{ }}.{\rm{ }}.{\rm{ }},N$ are the nonzero eigenvalues of the
graph Laplacian matrix $\cal L$. Moreover, $k^{\prime} := \arg \min _{l \in \mathbb{N}: t \geq t_{l}^{j}}\left\{t-t_{l}^{j}\right\}$, $t_{k'}^j$ denotes the $k'$-th triggering event of agent $j$ and ${x_j}(t_{k'}^j)$ is the state of agent $j$ at the triggering event $t_{k'}^j$.
\smallskip

\noindent
\section{ATTACK ANALYSIS}
In this section, the attack analysis for event-triggered based consensus of MASs is investigated.
It is shown in the following that an attacker that has resources to get access to some knowledge about the agents' dynamics and graph topology hereafter called strategic attacker, can cause non-triggering misbehavior in CS-ETM and continuous triggering misbehavior and consequently the Zeno behavior in S-ETM.  

The deception attack on sensors and actuators of compromised agents is respectively modeled as 

\begin{equation}\label{eq:100}
x_i^c(t) = {x_i}(t) + {\beta _i}x_i^a(t),
\end{equation}
with
\begin{equation}\label{eq:101}
{\beta _i} = \left\{ {\begin{array}{*{20}{c}}
1&{Agent\,i\,is\,under\,sensor\,attack}\\
0&{Otherwise,}
\end{array}} \right.
\end{equation}
while
\begin{equation}\label{eq:102}
u_i^c(t) = {u_i}(t) + f_i(t),
\end{equation}
\begin{equation}\label{eq:102*}
f_i(t)={\alpha _i}u_i^a(t),
\end{equation}
\begin{equation}\label{eq:103}
{\alpha _i} = \left\{ {\begin{array}{*{20}{c}}
1&{Agent\,i\,is\,under\,actuator\,attack}\\
0&{Otherwise,}
\end{array}} \right.
\end{equation}
where ${x_i} \in {\mathbb{R}^n}$ is the normal state, $x_i^a \in {\mathbb{R}^n}$ denotes the attack signal inserted into the state of agent $i$, $x_i^c \in {\mathbb{R}^n}$ is the manipulated measurement, ${u_i} \in {\mathbb{R}^m}$ is the nominal control protocol, $u_i^a \in {\mathbb{R}^m}$ denotes the attack signal inserted into the actuators of agent $i$, and $u_i^c \in {\mathbb{R}^m}$ is the compromised control protocol applied to agent $i$.
\medskip

Fig. \ref{fig:Fig1} shows the overall structure of an event-triggered control protocol for the MAS under attack.

\begin{figure}[!ht]
\begin{center}
\includegraphics[width=3.3in,height=2in]{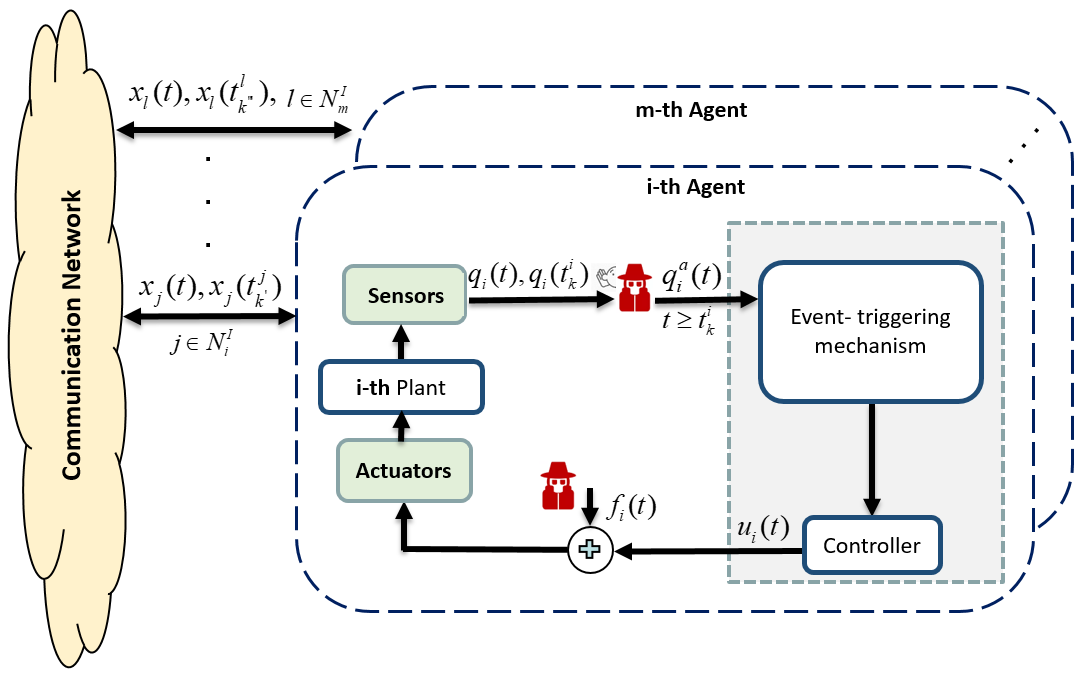}
\vspace{-5pt}
\caption{The overall structure of a strategic attack on event-triggered control protocols.}\label{fig:Fig1}
\captionsetup{justification=centering}
\vspace{0pt}
\end{center}
\end{figure}
\vspace{-5pt}
\medskip

\subsection{Strategic sensor attack effect on CS-ETM}


Using \eqref{eq:1}, \eqref{eq:7*}, \eqref{eq:100} and \eqref{eq:102}, the dynamics of the compromised agent $i$ and the compromised local neighborhood tracking error at time instant $t$, denoted by $q_i^c(t)$, become
\begin{equation}\label{eq:10}
{\dot x_i}(t) = A{x_i}(t) + B{u_i}(t) + B{f_i}(t),
\end{equation}
and
\begin{equation}\label{eq:1000}
q_i^c(t) = {q_i}(t) + {\beta _i}q_i^m(t),
\end{equation}
where $q_i^m(t) =  - {d_i}x_i^a(t)$ is the local neighborhood tracking error of agent $i$ and $d_i$ is the weighted in-degree of agent $i$.

The aim of the attacker is to harm the compromised agent or even the entire network by deceiving the triggering mechanism and consequently to degrade the performance of the MAS. To deceive an agent into exhibiting a non-triggering misbehavior, a strategic attacker can tamper the sensor reading, i.e., ${q_i}(t)$, by scheming a replay attack.
The following theorem shows that a strategic attacker can launch an attack on the sensors of an agent and change the consensus value without causing the violation of the triggering condition \eqref{eq:8}. Furthermore, this malicious attack strategy can cause non-triggering misbehavior which can make the original graph ${\cal G}$ clustered into several subgraphs and therefore harm the communication graph connectivity. To show this, we first need the following definition. 

\noindent
\quad\textbf{Definition 1. (Vertex cut).} A vertex cut of ${\cal G}$ is a set of nodes  $\mathcal V(\Xi ) \subseteq \mathcal V\left(\cal G \right)$ such that removing them from the graph ${\cal G}$, i.e., ${\cal G} \setminus \Xi $, results in a disconnected graph clusters.

\begin{theorem} \label{theor5}
Consider the MAS \eqref{eq:10} under the control protocol \eqref{eq:9}. Let a set of agents $\mathcal V(\Xi ) \subseteq \mathcal V\left(\cal G \right)$ be the vertex cut of the graph ${\cal G}$ and all of its members be under a strategic replay attack, for $t \ge t_k^i$, given by 
\begin{align} \label{eq:300}
q_i^c(t) = q_i^E(t_k^i) + {\theta _i^a} 1_n,
\end{align}
where  $q_i^E(t_k^i) = {q_i}(t_k^i)$ is the eavesdropped communicated information ${q_i}(t)$ at the triggering time instant $t_k^i$,\\
and ${\theta _i^a}  \sim  {\cal U}(a_{i},b_{i})$
 with
 \begin{align} \label{eq:350}
 a_{i}={(1 + {\eta _i})^{ - 1}}\left\| {{q_i}(t_k^i)} \right\| - \left\| {q_i^E(t_k^i)} \right\|,
  \end{align}
  \begin{align} \label{eq:351}
  b_{i}={(1 - {\eta _i})^{ - 1}}\left\| {{q_i}(t_k^i)} \right\| - \left\| {q_i^E(t_k^i)} \right\|,
  \end{align}
  is a uniformly distributed random number, where $\eta_i$ is the event threshold defined in \eqref{eq:8}. Then, the triggering condition \eqref{eq:8} can never be violated for agent $i$, $\forall i \in \mathcal V(\Xi )$ and consequently, it shows non-triggering misbehavior which makes the original graph ${\cal G}$ clustered into several subgraphs.
\end{theorem}

\begin{proof}
The proof is carried out in two steps. In the first step, we show that the condition \eqref{eq:8} can never be violated for agent $i$, $\forall i \in \mathcal V(\Xi )$, if 
\begin{equation}\label{eq:14***}
{(1 + {\eta _i})^{ - 1}}\left\| {{q_i}(t_k^i)} \right\| < \left\| {{q_i}(t)} \right\| < {(1 - {\eta _i})^{ - 1}}\left\| {{q_i}(t_k^i)} \right\|
\end{equation}
holds. In the second step, we use \eqref{eq:14***} to conclude the proof. To this aim, using \eqref{eq:7} and some manipulation, \eqref{eq:8} can be rewritten as
\begin{equation}\label{eq:14*}
(1 - {\eta _i}^2){\left\| {{q_i}(t)} \right\|^2} - 2\left\| {{q_i}(t)} \right\|\left\| {{q_i}(t_k^i)} \right\| + {\left\| {{q_i}(t_k^i)} \right\|^2} \le 0.
\end{equation}
Note that ${q_i}(t_k^i)$ is constant for $t \in \left[ {t_k^i,t_{k + 1}^i} \right)$. Therefore, it follows from the quadratic formula that
\begin{align} \label{eq:14**}
(1 - {\eta _i}^2){\left\| {{q_i}} \right\|^2} - 2\left\| {{q_i}(t_k^i)} \right\|\left\| {{q_i}} \right\| + {\left\| {{q_i}(t_k^i)} \right\|^2} =&  \nonumber \\
(1 - {\eta _i}^2)(\left\| {{q_i}} \right\| - \frac{{\left\| {{q_i}(t_k^i)} \right\|}}{{(1 - {\eta _i})}})(\left\| {{q_i}} \right\| -& \frac{{\left\| {{q_i}(t_k^i)} \right\|}}{{(1 + {\eta _i})}}).
\end{align}
Completing the square, one can see that under the strategic replay attack given by \eqref{eq:300}, the event-triggered condition in \eqref{eq:8}, $\forall i \in \mathcal V(\Xi )$, can never be violated based on \eqref{eq:14***}, and thus one has
\begin{align} \label{eq:301}
\left\| {{e_i}}(t) \right\|= \left\| q_i^c(t)-q_i(t_{k}^{i}) \right\|  \le {\eta _i}(t)\left\| q_i^c(t) \right\|, t \ge t_{k}^{i}, \forall i \in \mathcal V(\Xi ).
\end{align}
Then, these agents do not transmit their information to their neighbors and they act as sink agents. Since the set of agents $\mathcal V(\Xi )$ is a vertex cut, the non-triggering misbehavior of agents in $\mathcal V(\Xi )$   prevents the exchange of information from happening between agents in some segments of the graph $\cal G$, and therefore clusters the original graph $\cal G$ into some subgraphs. This completes the proof.
\end{proof}





\subsection{Strategic actuator attack effect on S-ETM}
In this subsection, it is shown that for the S-ETM, a strategic malicious attack on the actuator can cause the entire network to exhibit a continuous-triggering misbehavior. 

\begin{theorem} \label{Theor2}
Consider the MAS \eqref{eq:10} with the control protocol \eqref{eq:6}, subject to a strategic malicious attack on its actuator; i.e., ${\beta _i} = 0$ and ${\alpha _i} = 1$. Then, injecting an attack signal with sufficiently large magnitude into the agent $i$ can cause the network to exhibit a continuous-triggering misbehavior which results in Zeno behavior.
\end{theorem}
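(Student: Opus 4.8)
The plan is to show that the actuator attack lets the adversary drive the inter-event times of agent $i$ below any prescribed bound, so that the triggering rule \eqref{eq:5} fires arbitrarily fast and the execution instants pile up, which is exactly the continuous-triggering (Zeno) misbehavior. First I would push the attack model \eqref{eq:102}--\eqref{eq:103} through the closed loop. Setting $\beta_i=0$, $\alpha_i=1$ and using the scalar S-ETM data $A=0$, $B=1$, the state equation \eqref{eq:10} collapses to $\dot x_i(t)=u_i(t)+f_i(t)$, where $u_i(t)$ from \eqref{eq:6} is constant on each interval $[t_k^i,t_{k+1}^i)$ and $f_i(t)=u_i^a(t)$ is the injected signal. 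Differentiating the error \eqref{eq:4} on the interval gives $\dot e_i(t)=-(u_i(t)+f_i(t))$ together with the reset $e_i(t_k^i)=0$, so that $e_i(t_k^i+\tau)=-\int_{t_k^i}^{t_k^i+\tau}(u_i+f_i)\,ds$; its magnitude therefore grows at a rate the attacker can make as large as desired by enlarging $\|f_i\|$.

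Next I would characterize the inter-event time $\tau_k:=t_{k+1}^i-t_k^i$ as the smallest $\tau>0$ at which \eqref{eq:5} holds with equality, i.e. $e_i(t_k^i+\tau)^2=\eta_i\,s_i(t_k^i+\tau)^2$ with $s_i(t):=\sum_{j\in N_i^I}(x_i(t)-x_j(t))$. Because $e_i$ is reset to zero while $s_i(t_k^i)\neq 0$ (the agents have not yet reached consensus), the condition fails at $\tau=0$, so $\tau_k>0$ and the real content is an upper bound on $\tau_k$. On the interval $s_i$ obeys $\dot s_i=d_i\,(u_i+f_i)-\sum_{j\in N_i^I}\dot x_j$, where $d_i$ is the in-degree and the neighbor terms $\dot x_j$ are independent of $f_i$ and stay bounded over a short window; hence both $e_i$ and $s_i$ are dominated by the common $f_i$-term. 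Writing $c:=u_i+f_i$ one gets $e_i(\tau)\approx -c\tau$ and $s_i(\tau)\approx s_i(t_k^i)+d_i c\tau$, and solving the equality yields the first crossing at $\tau_k\approx \frac{\sqrt{\eta_i}\,|s_i(t_k^i)|}{|c|\,(1+\sqrt{\eta_i}\,d_i)}$ once the attacker picks $\mathrm{sign}(f_i)$ opposite to $\mathrm{sign}(s_i(t_k^i))$. Since $|c|\to\infty$ as $\|f_i\|\to\infty$, this gives $\tau_k\to 0$; making $\|f_i\|$ large enough forces $\tau_k$ below any tolerance uniformly in $k$, so events are generated continuously, which is the claimed Zeno behavior, and it propagates through the network via the neighbor coupling in \eqref{eq:6}.

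The step I expect to be the main obstacle is the race between the error and the threshold: the triggering bound $s_i$ is itself driven by the attacked state $x_i$, so enlarging $f_i$ inflates both sides of \eqref{eq:5} simultaneously, and one must verify that $|e_i|$ can still reach $\sqrt{\eta_i}\,|s_i|$ in a time that vanishes with $\|f_i\|$ -- this is where the sign choice for $f_i$ and the boundedness of the neighbor dynamics over the shrinking interval are essential. A secondary subtlety is upgrading ``arbitrarily small inter-event times'' to genuine Zeno accumulation: the estimate above makes each $\tau_k$ small but of comparable size, so strict accumulation at a finite instant would require the successive bounds to be summable (e.g. by letting $\|f_i(t)\|$ grow in time), a point I would address when passing from the per-interval estimate to the infinite sequence of events.
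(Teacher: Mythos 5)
Your argument is correct in substance but takes a genuinely different route from the paper's. The paper works in the frequency domain: it Laplace-transforms the closed loop, diagonalizes the Laplacian via ${\cal L}=T\Lambda T^{-1}$, and writes the disagreement $\sum_{k\in N_i^I}(x_i(s)-x_k(s))$ in terms of transfer functions $\Psi_{ik},\bar\Psi_{ik},\bar{\bar\Psi}_{ik}$ in \eqref{eq:432*}--\eqref{eq:432}, whose poles coincide with the eigenvalues of $A-\lambda_m BK$ and are therefore stable by the design of $K$; the conclusion is the magnitude condition \eqref{eq:433} on $f_i(s)$, under which the triggering inequality \eqref{eq:5} is claimed to be permanently violated. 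You instead work in the time domain on a single inter-event interval of the scalar integrator: reset $e_i$, integrate $\dot e_i=-(u_i+f_i)$ and $\dot s_i\approx d_i(u_i+f_i)$, and solve for the first crossing, getting $\tau_k\approx\sqrt{\eta_i}\,|s_i(t_k^i)|/\bigl(|c|(1+\sqrt{\eta_i}\,d_i)\bigr)$. What each buys: the paper's modal argument is stated for general $(A,B,K)$ and isolates the boundedness of the attack-to-disagreement channel, but its core step manipulates inequalities between Laplace transforms (\eqref{eq:433}--\eqref{eq:435}), which is formally loose; your per-interval estimate is elementary and quantitative, and exposes the actual mechanism (the attacker races the error against its own threshold and wins by a factor $1/|c|$), at the cost of being tied to the single-integrator setting of the S-ETM section and of freezing the neighbor rates on the interval. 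The latter is legitimate there, since neighbor controls in \eqref{eq:6} are piecewise constant between their own events, but your phrase ``independent of $f_i$'' is not right over longer horizons: neighbors react to agent $i$'s broadcast states, so their rates eventually grow with the attack-induced disagreement.

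On the subtlety you flag at the end: you do not need a time-growing $\|f_i(t)\|$ to upgrade small inter-event times to genuine Zeno accumulation; your own recursion already closes this gap. With constant $c$ chosen with sign opposite to $s_i(t_k^i)$, the update $s_i(t_{k+1}^i)=s_i(t_k^i)+d_i c\,\tau_k$ combined with your formula for $\tau_k$ gives $|s_i(t_{k+1}^i)|=|s_i(t_k^i)|/(1+\sqrt{\eta_i}\,d_i)$, so $|s_i(t_k^i)|$ and hence $\tau_k$ decay geometrically with ratio $(1+\sqrt{\eta_i}\,d_i)^{-1}<1$, and $\sum_k\tau_k$ converges: the events accumulate at the finite instant at which $s_i$ would cross zero. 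Carrying this one step further would make your time-domain proof deliver a sharper conclusion than the paper's own, which establishes the violation inequality \eqref{eq:435} but never exhibits an accumulation point for the triggering times.
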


\begin{proof}
Taking the Laplace transform and using \eqref{eq:4}, \eqref{eq:6} and \eqref{eq:102}, the global control signal becomes
\begin{equation}\label{eq:420}
u(s) =  - ({\cal L} \otimes {{K}})(e(s) + x(s)) + f(s),
\end{equation}
where $e(s) = {\left[ {{e_1}(s), \ldots ,{e_N}(s)} \right]^T}$, $x(s) = {\left[ {{x_1}(s), \ldots ,{x_N}(s)} \right]^T}$, $u(s) = {\left[ {{u_1}(s), \ldots ,{u_N}(s)} \right]^T}$ and $f(s) = {\left[ {{f_1}(s), \ldots ,{f_N}(s)} \right]^T}$, denote the global vector of measurement error, agent's state, signal control and attack signal, respectively. 
Define now the transfer function of the system \eqref{eq:1} from ${x_i}\left( s \right)$ to ${u_i}\left( s \right)$ as
\begin{equation}\label{eq:421}
G(s) = \frac{{{x_i}(s)}}{{{u_i}(s)}} = {(s{\cal I}_n - A)^{ - 1}}B.
\end{equation}
Utilizing \eqref{eq:420} and \eqref{eq:421}, the overall state can be written as
\begin{equation}\label{eq:422}
x(s) = ({{\cal I}_N} \otimes G(s))[ - ({\cal L} \otimes {{K}})(x(s)) + (f(s) - ({\cal L} \otimes {{\cal I}_n})e(s))].
\end{equation}

Let $T = [{v_{ij}}] \in {\mathbb{R}^{N \times N}}$ be a matrix of right eigenvectors of  $\cal L$ where its first column is the right eigenvector corresponding to $0$, then ${T^{ - 1}} = [{w_{ij}}] \in {\mathbb{R}^{N \times N}}$ is the matrix of its left eigenvectors and one has
\begin{align} \label{eq:423}
{\cal L} = T\Lambda {T^{ - 1}},
\end{align}
with $\Lambda $ being the Jordan canonical form of the graph Laplacian matrix $\cal L$. Utilizing \eqref{eq:423}, \eqref{eq:422} can be rewritten as follows
\begin{align}
&(T \otimes {I_n})\left[ {{{\cal I}_{Nn}} + (\Lambda  \otimes G(s)K)} \right]({T^{ - 1}} \otimes {{\cal I}_n})x(s) = \nonumber \\ 
& \qquad \qquad \qquad \qquad ({{\cal I}_N} \otimes G(s))[f(s) - ({\cal L} \otimes {{K}})e(s)]. \label{eq:424}
\end{align}
Multiplying both sides of \eqref{eq:424} by $({T^{ - 1}} \otimes {{\cal I}_n})$ and using the state transformation as $\hat x(s) = ({T^{ - 1}} \otimes {{\cal I}_n})x(s)$, one has
\begin{align}
&\left[ {{{\cal I}_{Nn}} + (\Lambda  \otimes G(s)K)} \right]\hat x(s) = \nonumber \\ 
&({{\cal I}_N} \otimes G(s))[({T^{ - 1}} \otimes {{\cal I}_n})f(s) - (\Lambda  \otimes {{K}})({T^{ - 1}} \otimes {{\cal I}_n})e(s)], \label{eq:1425}
\end{align}
which can be written as
\begin{align} \label{eq:425}
&\hat x(s) = {\left[ {{{\cal I}_{Nn}} + (\Lambda  \otimes G(s)K)} \right]^{ - 1}} \times \nonumber \\
& \qquad [({T^{ - 1}} \otimes G(s))f(s) - (\Lambda  \otimes G(s)K)({T^{ - 1}} \otimes {{\cal I}_n})e(s)].
\end{align}
Note that \eqref{eq:425} is a block diagonal system and the size of each block is identical to the size of the Jordan block associated with an eigenvalue ${\lambda _i}$ of the graph Laplacian matrix $\cal L$. Now, without loss of generality, assume that all Jordan blocks in $\Lambda$ are simple. Then, it follows from \eqref{eq:425} that
\begin{align}
{\hat x_i}(s) =& {\left[ {{{\cal I}_n} + {\lambda _i}G(s)K} \right]^{ - 1}}G(s) \times \nonumber \\
&\qquad \qquad [\sum\limits_{j = 1}^N {{w_{ij}}} {f_j}(s) - {\lambda _i}G(s)K\sum\limits_{j = 1}^N {{w_{ij}}} {e_j}(s)]. \label{eq:426}
\end{align}

Using $x(s) = (T \otimes {{\cal I}_n})\hat x(s)$, one has
\begin{align} \label{eq:427}
{x_i}(s) = \sum\limits_{m = 1}^N {{v_{im}}} {\hat x_m}(s),
\end{align}
and thus, the state of agent $i$ becomes
\begin{align} \label{eq:428}
&{x_i}(s) =
\sum\limits_{m = 1}^N {{v_{im}}} {\left[ {{{\cal I}_n} + {\lambda _m}G(s)K} \right]^{ - 1}}G(s) \times  \nonumber \\
& \qquad \qquad \qquad \qquad [\sum\limits_{j = 1}^N {{w_{mj}}} {f_j}(s) - {\lambda _m}K\sum\limits_{j = 1}^N {{w_{mj}}} {e_j}(s)].
\end{align}


After some algebraic manipulation, one has
\begin{align}\label{eq:431}
&\sum\limits_{k \in N_i^I}^{} {({x_i}(s) - {x_k}(s))}  = \nonumber \\
&\qquad \sum\limits_{k \in N_i^I}^{} {{\Psi _{ik}}(s)}  + \sum\limits_{k \in N_i^I}^{} {{{\bar \Psi }_{ik}}(s){f_i}(s)\,}  - \sum\limits_{k \in N_i^I}^{} {{{\bar {\bar \Psi} }_{ik}}(s)} {e_i}(s),
\end{align}
where
\begin{align} 
&{\Psi _{ik}}(s) = \sum\limits_{m = 2}^N {({v_{im}} - {v_{km}})} {[ {{{\cal I}_n} + {\lambda _m}G(s)K}]^{ - 1}}G(s) \times \nonumber  \\
& \qquad \qquad \qquad \qquad \qquad  [ {\sum\limits_{j \in {N_{ - i}}}^{} {{w_{mj}}} ({f_j}(s) - {\lambda _mK}{e_j}(s))} ],  \label{eq:432*} \\
&{\bar \Psi _{ik}}(s) = \sum\limits_{m = 2}^N {({v_{im}} - {v_{km}})} {\left[ {{{\cal I}_n} + {\lambda _m}G(s)K} \right]^{ - 1}}G(s){w_{mi}}, \label{eq:432**}  \\
&{\bar {\bar \Psi} _{ik}}(s) = \sum\limits_{m = 2}^N {({v_{im}} - {v_{km}})} {\left[ {{{\cal I}_n} + {\lambda _m}G(s)K} \right]^{ - 1}}G(s) {{\lambda _m}{w_{mi}}}. \label{eq:432}
\end{align}
and $ {N_{ - i}} = {\cal V} \backslash \{ {v_i}\}$.

We now show that \eqref{eq:432*}-\eqref{eq:432} are bounded, regardless of the attack signal ${f_j}(s)$. To this aim, we show that ${\left[ {{{\cal I}_n} + {\lambda _m}G(s)K} \right]^{ - 1}}G(s)$, $\forall m = 2,{\rm{ }}.{\rm{ }}.{\rm{ }}.{\rm{ }},N$ is Hurwitz, i.e., the poles of ${\left[ {{{\cal I}_n} + {\lambda _m}G(s)K} \right]^{ - 1}}G(s)$ are equal to the roots of the characteristic polynomial $A - {\lambda _m}BK$. After some manipulation, one has
\begin{align} \label{eq:430}
\begin{array}{l}
det{\rm{ }}(s{{\cal I}_n} - (A - {\lambda _m}BK)) = det(s{{\cal I}_n} - A + {\lambda _m}BK)\\
\,\,\,\,\,\,\,\,\,\,\,\,\,\,\,\,\,\,\,\,\,\,\,\,\,\,\,\,\,\,\,\,\,\, = det(s{{\cal I}_n} - A)det({{\cal I}_n} + {\lambda _m}{(s{{\cal I}_n} - A)^{ - 1}}BK)\\
\,\,\,\,\,\,\,\,\,\,\,\,\,\,\,\,\,\,\,\,\,\,\,\,\,\,\,\,\,\,\,\,\,\, = det(s{{\cal I}_n} - A)det({{\cal I}_n} + {\lambda _m}G(s)K),
\end{array}
\end{align}
which shows that the eigenvalues of $A - {\lambda _m}BK$ are equal to the poles of ${\left[ {{{\cal I}_n} + {\lambda _m}G(s)K} \right]^{ - 1}}G(s)$.  Therefore, ${\left[ {{{\cal I}_n} + {\lambda _m}G(s)K} \right]^{ - 1}}G(s)$ is Hurwitz and consequently, \eqref{eq:432*}-\eqref{eq:432} are bounded.
 Now, utilizing \eqref{eq:5} and \eqref{eq:431} and the fact that \eqref{eq:432*}-\eqref{eq:432} are bounded, one can observe that if 
\begin{align} \label{eq:433}
{f_i}(s) > {( {\sum\limits_{k \in N_i^I}^{} {{{\bar \Psi }_{ik}}(s)} } )^{ - 1}}( {( {\frac{{\cal I}_m}{{\sqrt {{\eta _i}} }} + \sum\limits_{k \in N_i^I}^{} {{{\bar {\bar \Psi} }_{ik}}(s)} } ){e_i}(s) - \sum\limits_{k \in N_i^I}^{} {{\Psi _{ik}}(s)} } ),
\end{align}
then
\begin{align} \label{eq:434}
\sum\limits_{k \in N_i^I}^{} {{\Psi _{ik}}(s)}  + \sum\limits_{k \in N_i^I}^{} {{{\bar \Psi }_{ik}}(s){f_i}(s)\,}  - \sum\limits_{k \in N_i^I}^{} {{{\bar {\bar \Psi} }_{ik}}(s)} {e_i}(s) > \frac{{\cal I}_m}{{\sqrt {{\eta _i}} }}{e_i}(s) ,
\end{align}
and consequently
\begin{align} \label{eq:435}
 {\eta _i}(\sum\limits_{k \in N_i^I}^{} {({x_i}(s) - {x_k}(s)})^2) - {({e_i}(s))^2} > 0.
\end{align}
This causes the network to exhibit continuous-triggering misbehavior. Therefore, by injecting a constant attack signal with sufficiently large magnitude into the agent $i$, an attacker can cause the triggering condition \eqref{eq:5} to be permanently violated, resulting in the undesirable phenomenon of Zeno behavior. This completes the proof.
\end{proof}

\section{SIMULATION RESULTS}
In this section, two examples are provided to illustrate theoretical results of the previous section. 

\subsection{Strategic sensor attack on CS-ETM}
In this subsection, the effects of the strategic malicious attack on sensor on the CS-ETM is analyzed. Assume a group of 8 agents with single integrator dynamic, i.e., $A=0$ and $B=1$ in (\ref{eq:1}), communicating with an undirected graph topology depicted in Fig. \ref{fig:Fig6}.

\begin{figure}[!ht]
\begin{center}
\includegraphics[width=2in,height=0.7in]{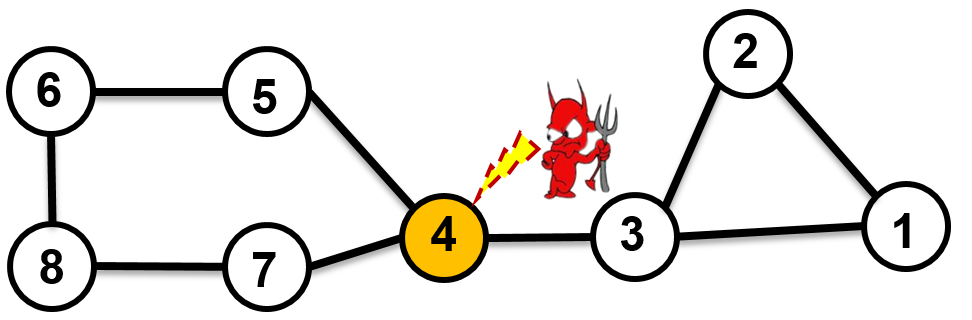}
\vspace{-5pt}
\caption{The Communication Graph.}\label{fig:Fig6}
\captionsetup{justification=centering}
\vspace{-5pt}
\end{center}
\end{figure}

The control protocol \eqref{eq:9} and the measurement error \eqref{eq:7} is used. The triggering condition \eqref{eq:8} is used with ${\eta _1} = {\eta _2} = {\eta _3} = {\eta _4}= {\eta _5}= {\eta _6}= {\eta _7}= {\eta _8}= 0.01$. The initial condition of agents is assumed to be ${x_1}(0) = 6,{x_2}(0) = 1,{x_3}(0) =  -3,{x_4}(0) =  1,{x_5}(0) =  2,{x_6}(0) =  1,{x_7}(0) =  - 2,{x_8}(0) =  - 5$. It is assumed that Agent 4 is under a strategic replay attack \eqref{eq:300} for ${\rm{t }} \ge {\rm{ 5}}{\rm{.1}}$ Sec. Fig. \ref{fig:Fig7} shows the state of agents. Agent $4$ exhibits no-triggering misbehavior and causing the original graph to cluster into 2 subgraphs, therefore harming the communication graph connectivity. These results illustrate the results of Theorem 1.   
\vspace{-8pt}

\begin{figure}[!ht]
\begin{center}
\includegraphics[width=3.3in,height=1.8in]{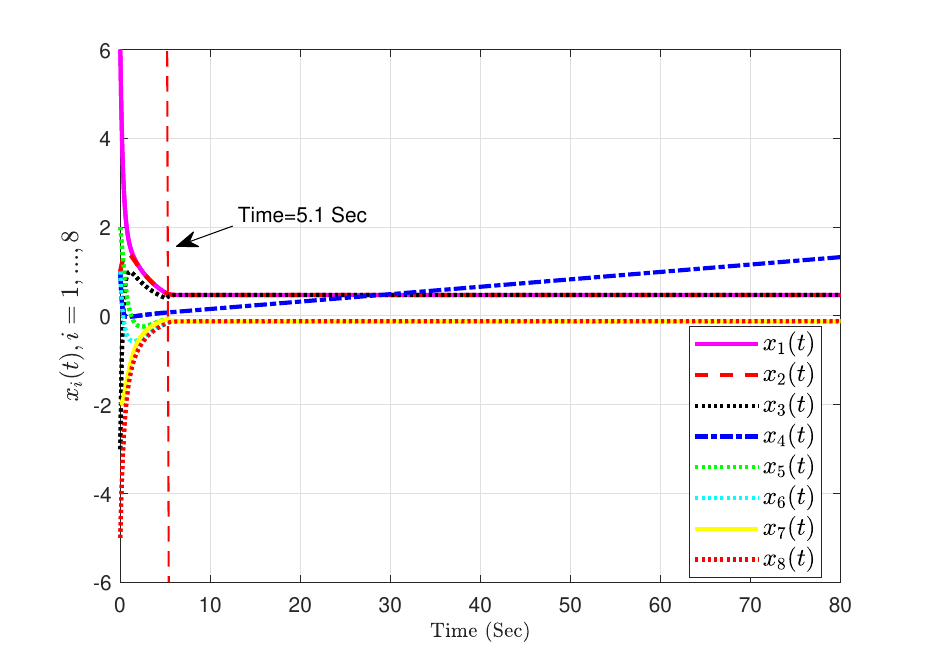}
\caption{The state of agents with the CS-ETM control protocol when Agent 4 is under a strategic replay attack \eqref{eq:300} on its sensors.}\label{fig:Fig7}
\captionsetup{justification=centering}
\end{center}
\end{figure}
\vspace{-15pt}

\subsection{Strategic actuator attack on S-ETM}
In this subsection, the effects of a strategic malicious attack on the S-ETM is illustrated. Assume that there are 4 agents with single integrator dynamics, i.e., $A=0$ and $B=1$ in (\ref{eq:1}), communicating over the graph topology depicted in Fig. \ref{fig:Fig3}.
\vspace{-8pt}

\begin{figure}[!ht]
\begin{center}
\includegraphics[width=1.45in,height=1in]{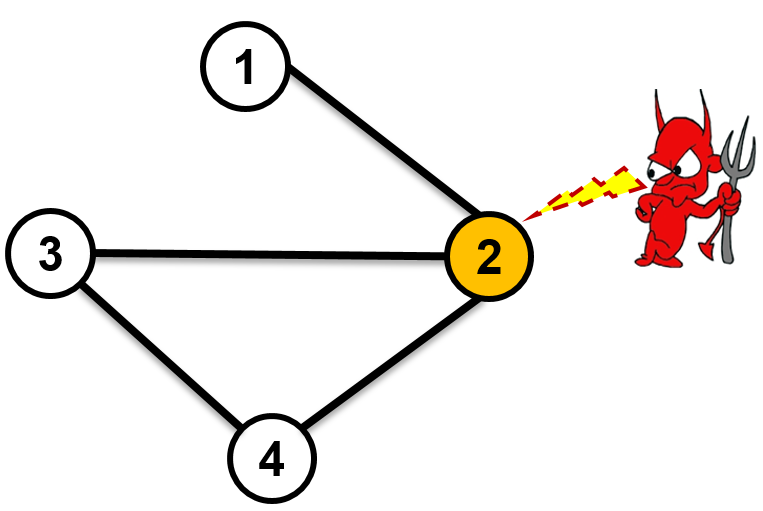}
\caption{The Communication Graph.}\label{fig:Fig3}
\captionsetup{justification=centering}
\vspace{0pt}
\end{center}
\end{figure}

\begin{figure}[!ht]
\begin{center}
\includegraphics[width=3.3in,height=1.8in]{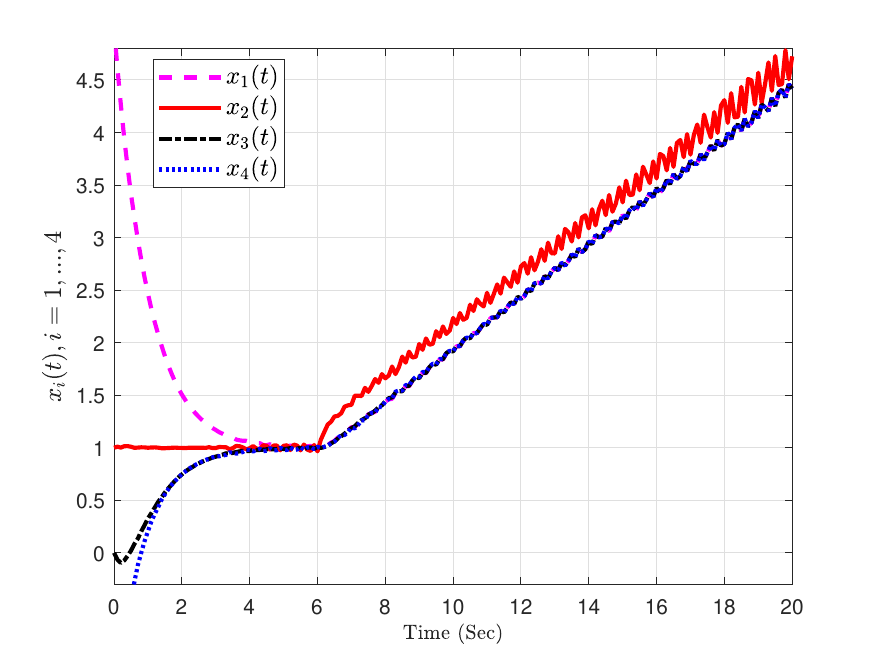}
\vspace{0pt}
\caption{The state of agents with the S-ETM control protocol when Agent 2 is under a strategic malicious attack on its actuator.}\label{fig:Fig4}
\captionsetup{justification=centering}
\vspace{0pt}
\end{center}
\end{figure}
\vspace{-15pt}

The control protocol \eqref{eq:6} and the measurement error \eqref{eq:4} is used. The triggering condition \eqref{eq:5} is used with ${\eta _1} = {\eta _2} = {\eta _3} = {\eta _4} = 0.01$. The initial conditions of agents are assumed to be ${x_1}(0) = 5,{x_2}(0) = 1,{x_3}(0) = 0,{x_4}(0) =  - 2$. Now, let Agent 2 be under a constant actuator attack with the signal $f_2(t)=-1$ for $t > 6$ Sec.  Fig. \ref{fig:Fig4} and Fig. \ref{fig:Fig5} show the state and the measurement error \eqref{eq:4} for all agents. It can be seen that before the attacks all agents reach consensus and the measurement error converges to zero. This implies that the entire network reached the desired consensus value and no further triggering event is required. However, when a strategic malicious attack on the actuator of Agent 2 is launched, all agents start to diverge and the network shows a continuous-triggering misbehavior, as shown in Fig. \ref{fig:Fig5}. These results illustrate the results of Theorem \ref{Theor2}.   

\begin{figure}[!ht]
\begin{center}
\includegraphics[width=3.3in,height=2in]{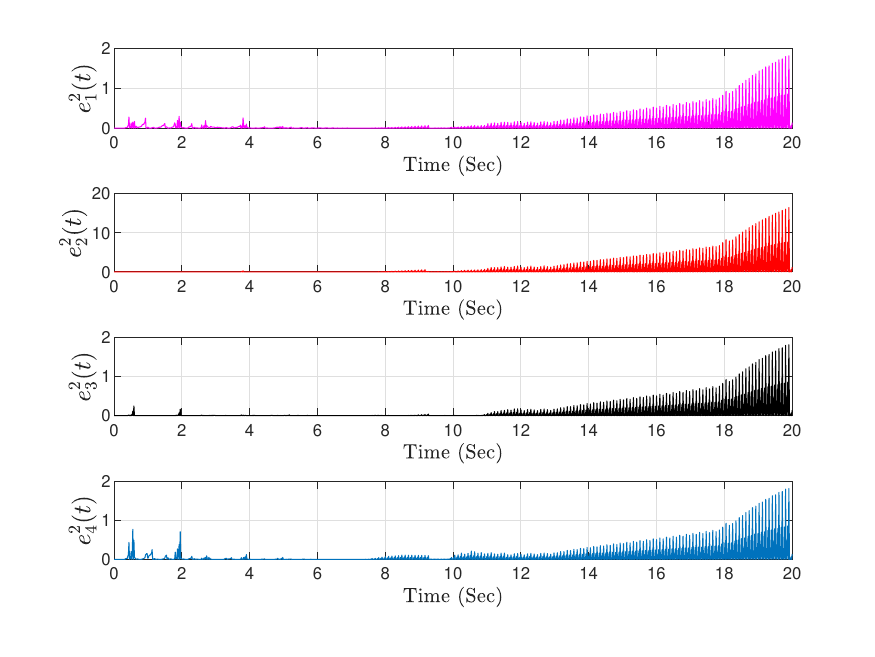}
\vspace{0pt}
\caption{The square of measurement error \eqref{eq:4} in the S-ETM. The entire network shows continuous triggering misbehavior.}\label{fig:Fig5}
\captionsetup{justification=centering}
\vspace{0pt}
\end{center}
\end{figure}
\vspace{-15pt}

\section{CONCLUSION}
The adverse effects of cyber-physical attacks on the performance of distributed MASs with the event-triggered controller are analyzed. Both combinational state-based event mechanism (CS-ETM) and state-based event mechanism (S-ETM) are considered and the effect of attacks on both event mechanisms are studied. It is shown that an attacker can design a strategic malicious attack on actuators and sensors to falsify both event mechanisms. In the CS-ETM, it affects the triggering condition in the sense that no events are triggered, while the team of agents does not reach consensus. Furthermore, in the S-ETM, it fools the event-triggered mechanism to continuously generate triggering events, and thus, resulting in the undesirable phenomenon of Zeno behavior.

\bibliographystyle{IEEEtran}
\bibliography{IEEEabrv,ref.bib}

\vfill

\end{document}